\newtheorem{theorem}{Theorem}
\title{Se-HiLo: Noise-Resilient Semantic Communication with High-and-Low Frequency Decomposition}
\author{
Zhiyuan Xi$^1$\and
Kun Zhu$^2$\and
Yuanyuan Xu$^{2}$
\emails
xizhiyuan@nuaa.edu.cn,
zhukun@nuaa.edu.cn,
yuanyuan\_xu@hhu.edu.cn,
}
\begin{document}

\maketitle

\begin{abstract}
Semantic communication has emerged as a transformative paradigm in next-generation communication systems, leveraging advanced artificial intelligence (AI) models to extract and transmit semantic representations for efficient information exchange. Nevertheless, the presence of unpredictable semantic noise, such as ambiguity and distortions in transmitted representations, often undermines the reliability of received information. Conventional approaches primarily adopt adversarial training with noise injection to mitigate the adverse effects of noise. However, such methods exhibit limited adaptability to varying noise levels and impose additional computational overhead during model training. To address these challenges, this paper proposes Noise-Resilient \textbf{Se}mantic Communication with \textbf{Hi}gh-and-\textbf{Lo}w Frequency Decomposition (Se-HiLo) for image transmission. The proposed Se-HiLo incorporates a Finite Scalar Quantization (FSQ) based noise-resilient module, which bypasses adversarial training by enforcing encoded representations within predefined spaces to enhance noise resilience. While FSQ improves robustness, it compromise representational diversity. To alleviate this trade-off, we adopt a transformer-based high-and-low frequency decomposition module that decouples image representations into high-and-low frequency components, mapping them into separate FSQ representation spaces to preserve representational diversity. Extensive experiments demonstrate that Se-HiLo achieves superior noise resilience and ensures accurate semantic communication across diverse noise environments.

\end{abstract}

\section{Introduction}

The information theory of Shannon and Weaver~\cite{Shannon,Weaver} laid the foundation for the development of modern communication technologies. In their theory, communication is categorized into three levels: syntax, semantics, and pragmatics. These levels respectively focus on the accurate transmission of symbols, the precise conveyance of the meanings or intentions of the transmitted information, and the effective utilization of that information.

Fueled by rapid advancements in artificial intelligence (AI) technologies, the conventional syntax-focused communication paradigm is progressively evolving into a semantic-driven paradigm~\cite{survey_1,survey_2}. Recent years have witnessed a surge of research efforts dedicated to semantic communication, resulting in the development of various deep neural network-based semantic communication systems~\cite{semantic_sys_1,semantic_sys_2}.

Despite the promising capabilities of neural networks in extracting semantic representations from source data and transmitting them over physical channels, these representations are inherently susceptible to various forms of semantic noise perturbations. Such perturbations stem from natural interference in electronic devices or deliberate malicious jamming, both of which can severely degrade decoding performance at the receiver. As illustrated in Fig.~\ref{semantic_noise}, semantic noise distorts transmitted representations, causing noticeable deviations in reconstructed images and significantly reducing classification accuracy.
\begin{figure}[htbp]
    \centering
    \noindent\includegraphics[width=3.4in]{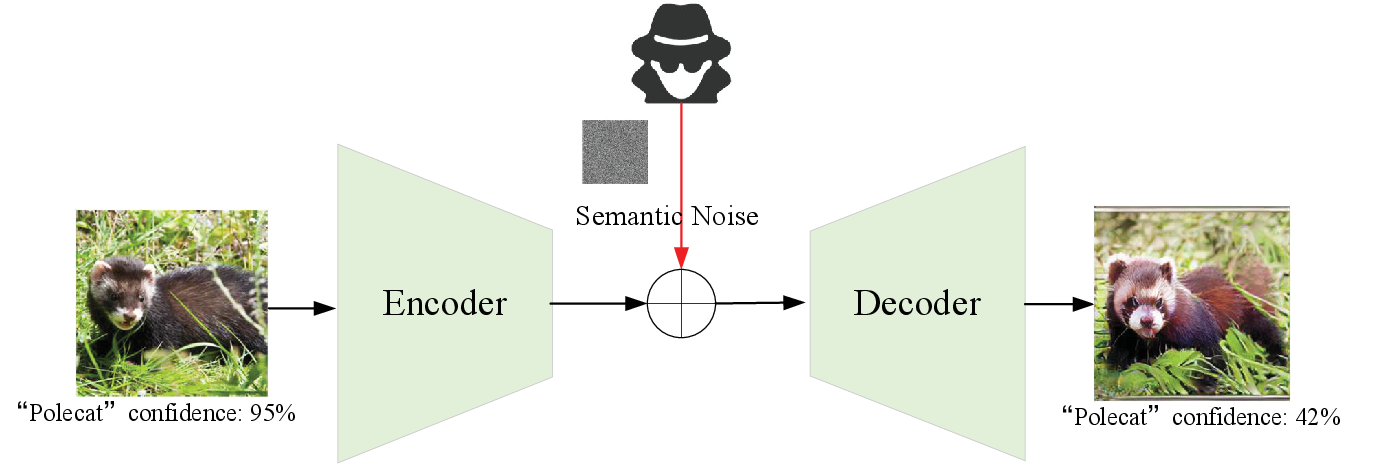}
    \caption{The Impact of Semantic Noise}
    \label{semantic_noise}
\end{figure}

Extensive research~\cite{semantic_noise_1,semantic_noise_2} has been conducted on noise-robust techniques, most of which focusing on adversarial training strategies to introduce noise (e.g., Gaussian white noise) during the training phase. This process encourages neural networks to learn noise-resilient features, thereby enhancing their robustness to semantic noise in communication systems. While conceptually intuitive, these approaches face inherent limitations when applied to real-world scenarios where noise patterns are highly unpredictable and dynamically evolving. Consequently, these methods often struggle to generalize across diverse noise environments,  requiring frequent retraining to accommodate specific noise levels. Such retraining not only incurs substantial computational costs but also undermines scalability, posing significant challenges for the practical deployment of semantic communication systems.


To investigate the impact of semantic noise on semantic representations, we conducted a preliminary experiment using the ResNet34 model~\cite{resnet}. A 10-class image dataset was fed into the model, with Gaussian white noise  progressively added to the output vectors, reducing the signal-to-noise ratio (SNR) from 10 dB to -5 dB. The resulting semantic representations were visualized in 2D space using t-SNE for dimensionality reduction to analyze their distribution. As depicted at the top of Fig.~\ref{tsne}, when noise is absent or minimal, the representations of different categories are well-separated in the representation space, making them easily distinguishable. However, as the noise level increases, the representations from different categories gradually become entangled and intermixed in the representation space, significantly reducing their separability and resulting in a sharp drop in classification accuracy.
\begin{figure*}[htbp]
    \centering
    \noindent\includegraphics[width=5.8in]{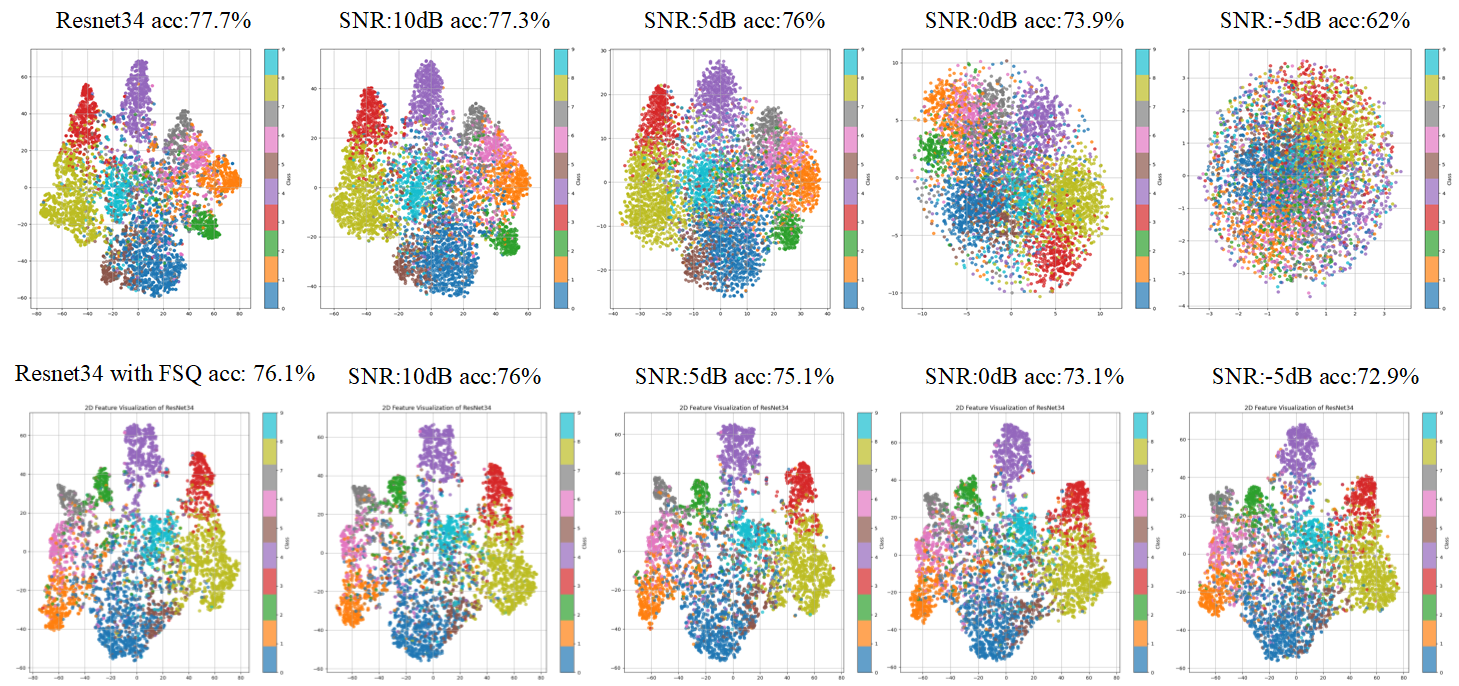}
    \caption{t-SNE Visualization: ResNet34 (Top) vs. ResNet34 with FSQ (Bottom) under Varying SNRs}
    \label{tsne}
\end{figure*}

This observation suggests that noise disrupts the distribution of semantic representations. By constraining these representations to remain within specific regions of the feature space, the impact of noise can be mitigated to some extent. Building on this concept, we design a \textbf{Finite Scalar Quantization (FSQ)}~\cite{fsq} based noise-resilient module, which discretizes each dimension of the feature space into a finite set of values. For instance, consider a 3D vector. With FSQ, each component of the vector is restricted to three possible values, resulting in $3 \times 3 \times 3 = 27$ distinct vector combinations in the space. These combinations, referred to as reference vectors, serve as anchors to which all 3D vectors are mapped via rounding. Semantic representations distorted by noise during transmission, are constrained by FSQ. This method effectively reduces noise impact and enables the system to handle diverse noise levels without requiring artificial noise injection during training.

We integrated the FSQ module into ResNet34 and repeated the same noise injection experiment. As illustrated at the bottom of Fig.~\ref{tsne}, the semantic representations of different categories remain well-separated in the representation space, and the classification accuracy exhibits a more gradual decline as the noise level increases.

However, we observed that the application of FSQ leads to a noticeable reduction in classification performance. This degradation stems from the constraints enforced on the representation space, which limit the model to capture diverse and fine-grained features. In the initial design, uniform constraints were applied across the entire representation, inevitably resulting in information loss. Neural networks inherently exhibit a preference for learning low-frequency components~\cite{dnn_low_frequency}, such as global structures, which are often more redundant. In contrast, high-frequency components, including fine details and textures, are typically less emphasized. Due to the constraints imposed by FSQ, these high-frequency features become more susceptible to degradation.

This imbalance in frequency representation also presents a significant challenge for semantic communication systems, especially as modern research increasingly emphasizes reconstruction-oriented tasks. High-fidelity reconstruction demands both global structural information and local detail preservation, which cannot be achieved solely through low-frequency representations. To overcome this limitation, we propose a frequency decomposition strategy that separates semantic representations into high-and-low frequency components.

Building upon this motivation, we refine the design of the semantic encoder-decoder by integrating High-and-Low Frequency Decomposition Blocks (HiLo Blocks) as replacements for the standard transformer blocks~\cite{transformer} in ViT-based semantic communication systems~\cite{vit_semantic_1,vit_semantic_2}. These HiLo Blocks are specifically designed to capture both high-and-low frequency information through dedicated branches for each frequency component. Moreover, as demonstrated in Section~\ref{analysis}, incorporating high-low frequency decomposition further strengthens the noise-resilience of the FSQ module, improving overall robustness and performance.

In summary, our main contributions can be summarized as follows:
\begin{itemize}
    \item We conducted an in-depth analysis of the impact of semantic noise on semantic representations and illustrated the results through intuitive visualizations.
    \item To enable unified handling of multi-level noise, we propose the FSQ module to regulate semantic representations, enhancing robustness without relying on adversarial training or artificially injected noise.
    \item To further enhance the noise robustness of FSQ and mitigate the performance degradation caused by representation space constraints, we introduce a transformer-based high-and-low frequency decomposition module to decouple image representations in the frequency domain.
\end{itemize}


\section{Related Work}

\subsection{Deep Learning enabled Semantic Communication}
In recent years, numerous semantic communication systems have been developed based on advanced deep learning models. For text semantic transmission, a deep learning-based semantic communication framework, DeepSC, was proposed in~\cite{semantic_sys_1}. Compared to traditional communication paradigms, it enables transmission in low signal-to-noise ratio (SNR) environments. In the field of image transmission, Huang et al.~\cite{semantic_sys_2}. proposed a coarse-to-fine semantic communication system for image transmission based on Generative Adversarial Networks (GANs). For speech transmission, Weng et al.~\cite{sc_enc_dec_1} extended DeepSC to construct DeepSC-ST, a framework specifically designed for speech transmission. With the emergence of the Transformer~\cite{transformer} and and its state-of-the-art performance across multiple domains, recent studies~\cite{vit_semantic_1,vit_semantic_2,sc_enc_dec_2} have increasingly adopted Transformer-based architectures to design semantic communication systems.

However, these studies have primarily focused on the architecture design of semantic communication systems,  paying limited attention to the effects of semantic noise.

\subsection{Noise-Robust Semantic Communication}
With the advancement and refinement of semantic communication system frameworks, recent years have witnessed a surge of interest in addressing the challenges posed by semantic noise. Hu et al.~\cite{semantic_noise_1,semantic_noise_2} introduced a masked training strategy into semantic communication systems. Notably, in~\cite{semantic_noise_1}, they proposed a feature selection module designed to suppress noise-sensitive and task-irrelevant features, significantly enhancing system robustness and performance. With the emergence of diffusion models~\cite{diffusion}, numerous studies have investigated their application in semantic denoising. Wu et al.~\cite{semantic_noise_3,semantic_noise_4} and Xu et al. developed channel denoising modules based on diffusion models, incorporating noise during training to enable the model to progressively learn the noise distribution. Zhang et al.~\cite{semantic_noise_5} considered channel state information (CSI) and noise variance in their model design, enabling adaptive resource allocation for data transmission. 

Although these studies have explored various noise-robust strategies for semantic communication, real-world semantic noise is inherently complex and unpredictable. Existing approaches rely on adversarial training with artificially injected noise, which limits their adaptability to diverse and unseen noise patterns in practical scenarios. This underscores the need for a more flexible and generalizable approach that circumvents adversarial training and effectively mitigates complex semantic noise.

\subsection{Frequency-Domain Approaches in Neural Network}
Analyzing and addressing problems from a frequency-domain perspective is a widely adopted approach. Chen et al.~\cite{dnn_frequency_1} proposed Octave Convolution for high-and low-frequency decomposition in images. Similarly, Pan et al.~\cite{dnn_frequency_2} enhanced the Vision Transformer by introducing a High-and Low Frequency module. Leveraging the frequency-domain perspective has also enabled various applications. For instance, Kong et al.~\cite{dnn_frequency_3} developed a frequency-domain-based image deblurring algorithm, while Yu et al.~\cite{dnn_frequency_4} proposed the Frequency-Aware Spatiotemporal Transformer for video inpainting detection.

Notably, recent studies in semantic communication have also started to differentiate between various feature types. For instance, Liu et al.~\cite{semantic_frequency_1} proposed a feature importance evaluation method that dynamically allocates channel resources based on the significance of individual features.

\section{Noise-Resilient Semantic Communication with High-and-Low Frequency Decomposition}
As illustrated in Fig.~\ref{hilo_system}, this section provides the detailed introduction of our proposed Se-HiLo and  its noise robustness analysis.
\begin{figure*}[htbp]
    \centering
    \noindent\includegraphics[width=6in]{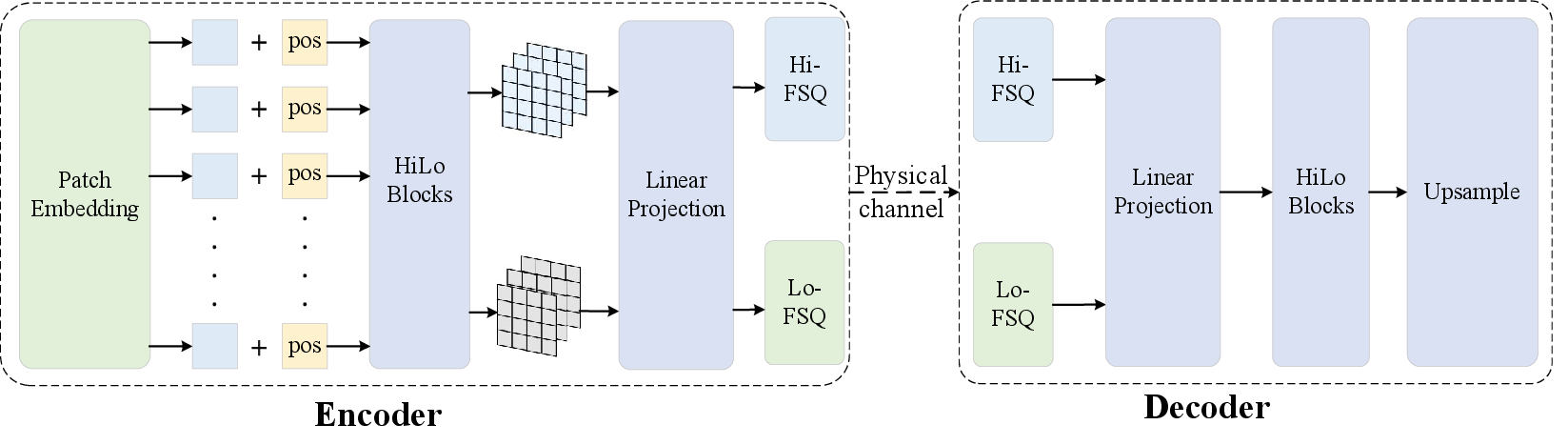}
    \caption{Noise-Resilient Semantic Communication Framework with High-and-Low-Frequency Decomposition}
    \label{hilo_system}
\end{figure*}

\subsection{Finite Scalar Quantization}

For clarity, we illustrate FSQ using single-dimensional data. Let $L$ denote the number of quantization levels. For odd $L$, levels are distributed symmetrically as $\{-k, \dots, 0, \dots, k\}$, where $k = \frac{L-1}{2}$, while for even $L$, an offset $o$ ensures alignment. The input $z \in \mathbb{R}$ is processed as follows:

\subsubsection{Bounding the Input}
To constrain $z$ within the quantization range, the bounding range is defined as 

\begin{equation}
\begin{split}
h = \frac{(L-1)(1+\epsilon)}{2},
\end{split}
\end{equation}
where $\epsilon > 0$ ensures stability. For even $L$, the offset is $o = 0.5$, while for odd $L$, $o = 0$. A shift $s = \tanh^{-1}\left(\frac{o}{h}\right)$ is applied to center the levels. The input is then transformed as $z_{\text{bounded}} = \tanh(z + s) \cdot h - o$, ensuring $z_{\text{bounded}} \in [-h, h]$.

\subsubsection{Scaling the Quantization Span}
To introduce flexibility in the quantization range, we scale the quantization span $[-h, h]$ to a new range $[-\alpha h, \alpha h]$, and $z_{\text{bounded}} = \alpha z_{\text{bounded}}$, where $\alpha > 0$ is a scaling factor.

\subsubsection{Quantization and Normalization}
The bounded value $z_{\text{bounded}}$ is quantized by rounding to the nearest level, $z_{\text{round}} = \textit{round}(\frac{z_{\text{bounded}}}{\alpha}) * \alpha$. The quantized result is normalized to $[-1, 1]$ via $z_{\text{normalized}} = z_{\text{round}} / \alpha h$ before being processed through the linear layer of the decoder.

\subsubsection{Noise Resilience and Reconstruction}
As illustrated in Fig.~\ref{fsq}, The \textit{Quantization Span} (left) defines the bounded operational region where discrete reference vectors (white circles) are positioned. These reference vectors correspond to the quantization levels, ensuring that input data is mapped to a finite set of values within this range after bounding.

The \textit{Noise Resilience Range} (right) represents the spherical region surrounding each reference vector. When a vector is perturbed by noise (red point), FSQ maps the disturbed vector directly back to its reference vector. This operation guarantees that noise-perturbed values are correctly aligned to their corresponding quantization levels without requiring explicit nearest-neighbor search.
\begin{figure}[htbp]
    \centering
    \noindent\includegraphics[width=3.2in]{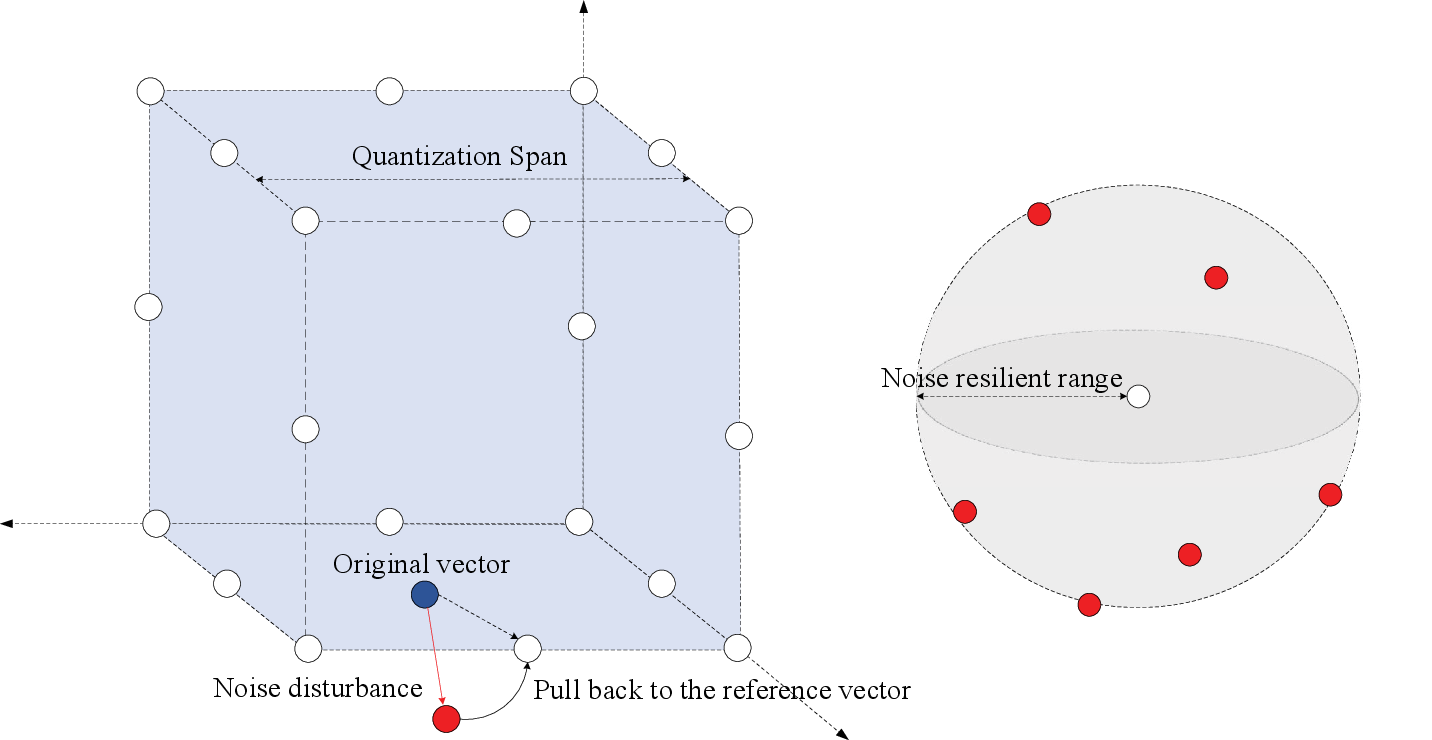}
    \caption{Finite Scalar Quantization}
    \label{fsq}
\end{figure}

\subsection{High-and Low-Frequency Decomposition}
Based on the design in \cite{dnn_frequency_2}, we propose a semantic encoder-decoder framework leveraging high-and-low frequency decomposition.
\subsubsection{High-and Low-Frequency Decomposition Block}

The HiLo Block serves as the core module of the Se-HiLo, as shown in Fig.~\ref{hiloblock}. It is specifically designed to decompose input features into high-and low-frequency components. Given an input tensor $X \in \mathbb{R}^{H \times W \times D}$, it is partitioned into high-frequency features $X_{\text{Hi}} \in \mathbb{R}^{H \times W \times D_{\text{Hi}}}$ and low-frequency features $X_{\text{Lo}} \in \mathbb{R}^{H \times W \times D_{\text{Lo}}}$, where $D_{\text{Hi}} + D_{\text{Lo}} = D$. 

Low-frequency features are downsampled to $X_{\text{Lo\_down}} \in \mathbb{R}^{\frac{H}{s} \times \frac{W}{s} \times D_{\text{Lo}}}$ and processed through a Lo-ViT Block utilizing cross-attention mechanisms to capture global contextual information. Meanwhile, high-frequency features are spatially divided into smaller groups, where self-attention mechanisms in a Hi-ViT Block are applied to extract fine-grained local details.

The outputs from both pathways are concatenated to generate the final output $X_{\text{Out}} \in \mathbb{R}^{H \times W \times D}$. This design effectively balances the preservation of local details and the representation of global contextual information.
\begin{figure}[htbp]
    \centering
    \noindent\includegraphics[width=3.4in]{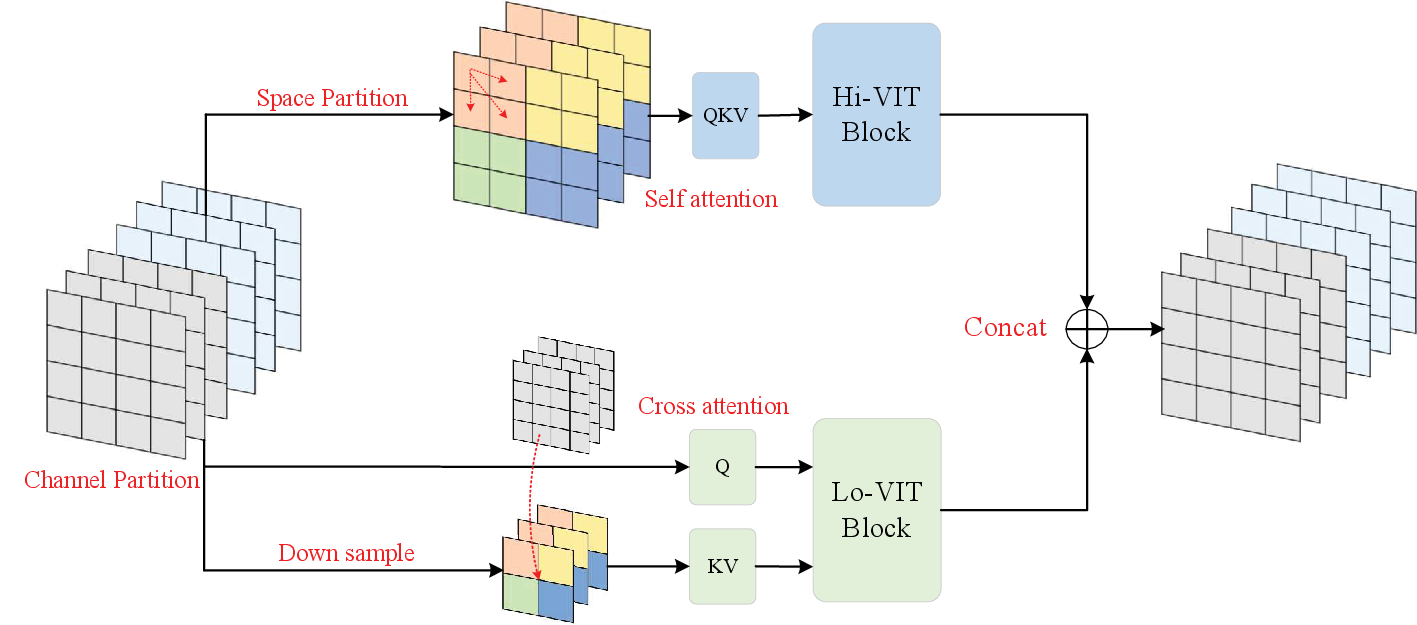}
    \caption{High-and-Low-Frequency Decomposition Block}
    \label{hiloblock}
\end{figure}
\subsubsection{Encoder Design}

The encoder consists of multiple HiLo blocks, which transforms the input data $I \in \mathbb{R}^{H \times W \times C}$ into high-frequency and low-frequency representations. Initially, the input data is divided into patches and embedded, producing $P \in \mathbb{R}^{N \times D}$, where $N$ represents the number of patches and $D$ denotes the representation dimension. These embedded patches are processed through multiple HiLo Blocks, decomposing the representations into high-frequency representations $P_{\text{Hi}} \in \mathbb{R}^{N_{\text{Hi}} \times D_{\text{Hi}}}$ and low-frequency representations $P_{\text{Lo}} \in \mathbb{R}^{N_{\text{Lo}} \times D_{\text{Lo}}}$. 

The high-and-low frequency representations are then mapped through separate linear transformations to align with the FSQ dimensions, producing $T_{\text{Hi}} \in \mathbb{R}^{{N} \times D_{\text{FSQ}}}$ and $T_{\text{Lo}} \in \mathbb{R}^{{N} \times D_{\text{FSQ}}}$, where $D_{\text{FSQ}}$ represents the dimensionality required for FSQ processing. Finally, the high-frequency and low-frequency representations are transmitted through physical channels.

\subsubsection{Decoder Design}

The decoder reconstructs the transmitted high-and-low frequency representations into the original image. The received high-frequency representation $R_{\text{Hi}} \in \mathbb{R}^{N \times D_{\text{FSQ}}}$ and low-frequency representation $R_{\text{Lo}} \in \mathbb{R}^{N \times D_{\text{FSQ}}}$ are first processed through frequency-specific reconstruction modules (FSQ). Following FSQ processing, each representation is linearly mapped back to its original feature dimensions, yielding $P_{\text{Hi\_rec}} \in \mathbb{R}^{N \times D_{\text{Hi}}}$ and $P_{\text{Lo\_rec}} \in \mathbb{R}^{N \times D_{\text{Lo}}}$. 

The reconstructed high-and-low frequency representations are then concatenated along the channel dimension to form $z \in \mathbb{R}^{N \times D}$. Similarly, 
$z$ is processed through multiple HiLo Blocks in the decoder. Finally, the refined representations are upsampled to the original spatial resolution, reconstructing the image $\hat{I} \in \mathbb{R}^{H \times W \times C}$, which closely approximates the original input $I$.

\subsubsection{Loss Function}
The loss function in this paper consists of four components. The first component is the reconstruction loss:
\begin{equation}
\begin{split}
\mathcal{L}_{\text{recon}} = \| \hat{I} - I \|_2^2,
\end{split}
\end{equation}
the second component is the perceptual loss:
\begin{equation}
\begin{split}
\mathcal{L}_{\text{perceptual}} = \| \phi(\hat{I}) - \phi(I) \|_2^2,
\end{split}
\end{equation}
where \( \phi \) represents the feature maps of a pre-trained network, such as VGG. The third component is the discriminator loss:
\begin{equation}
\begin{split}
\mathcal{L}_{\text{adv}} = -\mathbb{E}_{x \sim p_{\text{data}}}[\log D(I)] - \mathbb{E}_{z \sim p_{\text{z}}}[\log(1 - D(G(z)))],
\end{split}
\end{equation}
where \( D \) is the discriminator, \( G \) is the generator. 
The final loss function is a weighted sum of these components:
\begin{equation}
\begin{split}
\mathcal{L} = \mathcal{L}_{\text{recon}} + \lambda_1 \mathcal{L}_{\text{perceptual}}  + \lambda_2 \mathcal{L}_{\text{adv}}.
\end{split}
\end{equation}

\subsection{Noise Robustness Analysis} \label{analysis}
\subsubsection{Quantization}
Given an $N$-dimensional representation vector $\mathbf{x} = [x_1, x_2, \dots, x_N]$, it is quantized into a finite set of discrete values to constrain its representation. Assume each dimension spans the interval $[L, U]$, and the quantization level is specified by $\text{level} = [m_1, m_2, \dots, m_N]$. For the $i$-th dimension, the quantization points are:
\begin{equation}
    v_{ij} = L + j \cdot \Delta, \quad j = 0, 1, \dots, m_i - 1,
\end{equation}
where the step size $\Delta$ is:
\begin{equation}
    \Delta = \frac{U - L}{m_i - 1}.
\end{equation}
Each component $x_i$ is quantized to the closest point:
\begin{equation}
    \hat{x}_i = \arg\min_{v_{ij} \in \{L, L + \Delta, \dots, U\}} |x_i - v_{ij}|.
\end{equation}

\subsubsection{Single-Dimension Noise Robustness}
\begin{theorem}
For a single-dimension input $x_i$ quantized into $m_i$ levels over $[L, U]$ with step size $\Delta$ under additive white Gaussian noise (AWGN) $\eta_i \sim \mathcal{N}(0, \sigma^2)$, the probability of correct quantization is:
\begin{equation}
    P_{\text{correct}, i} = \text{erf}\left(\frac{\Delta}{2\sqrt{2}\sigma}\right).
\end{equation}
\end{theorem}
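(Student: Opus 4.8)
The plan is to recognize that ``correct quantization'' under the channel is exactly the event that the additive noise keeps the transmitted reference point inside its own decision cell, and then to evaluate the Gaussian probability of that event and rewrite it via the error function. Since FSQ transmits an already-quantized value (a reference point $v_{ij}$) and the receiver re-rounds, the natural model is: the receiver observes $v_{ij} + \eta_i$ with $\eta_i \sim \mathcal{N}(0,\sigma^2)$, and quantization is correct precisely when nearest-point rounding returns $v_{ij}$.

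First I would characterize the decision region. Because the levels $v_{ij} = L + j\Delta$ are uniformly spaced with step $\Delta$, the decision boundaries between adjacent interior levels sit at the midpoints $v_{ij} \pm \Delta/2$, so the Voronoi cell of any interior $v_{ij}$ is the interval $[\,v_{ij}-\tfrac{\Delta}{2},\, v_{ij}+\tfrac{\Delta}{2}\,]$. Correct re-quantization is therefore equivalent to $|\eta_i| \le \Delta/2$. Next I would compute $P\!\left(|\eta_i| \le \tfrac{\Delta}{2}\right) = \int_{-\Delta/2}^{\Delta/2} \tfrac{1}{\sigma\sqrt{2\pi}}\, e^{-t^2/(2\sigma^2)}\,dt$; the substitution $u = t/(\sigma\sqrt{2})$ converts this into $\tfrac{2}{\sqrt{\pi}}\int_0^{\Delta/(2\sqrt{2}\sigma)} e^{-u^2}\,du$, which is $\operatorname{erf}\!\left(\tfrac{\Delta}{2\sqrt{2}\sigma}\right)$ by definition of the error function, giving exactly the stated $P_{\text{correct},i}$.

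The delicate point (rather than a genuine obstacle) is the handling of the two extreme levels at $L$ and $U$. Their decision regions are one-sided, so if out-of-range observations are clipped back inward, the boundary levels enjoy a strictly larger correct-quantization probability of the form $\tfrac{1}{2}\bigl(1 + \operatorname{erf}(\tfrac{\Delta}{2\sqrt{2}\sigma})\bigr)$. Consequently the clean formula is exact for every interior level and is a uniform lower bound across all levels; I would flag this so the per-dimension statement is read as the worst-case (interior-level) probability, which is precisely what matters for a robustness guarantee. Verifying the $\Phi$-to-$\operatorname{erf}$ identity $\Phi(x) = \tfrac{1}{2}\bigl(1+\operatorname{erf}(x/\sqrt{2})\bigr)$ and confirming that the half-open versus closed choice of interval is immaterial (the boundary has Lebesgue measure zero) are the only remaining routine checks.
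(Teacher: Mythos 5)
Your proposal is correct and follows essentially the same route as the paper: both reduce correct quantization to the event $|\eta_i| \le \Delta/2$, evaluate the Gaussian integral with the substitution $u = \eta_i/(\sqrt{2}\sigma)$, and invoke the definition of $\operatorname{erf}$. Your additional observation about the boundary levels at $L$ and $U$ (where the one-sided decision regions make the stated formula a strict lower bound rather than an exact probability) is a valid refinement that the paper's proof silently glosses over, but it does not change the substance of the argument.
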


\begin{proof}
Given the received signal:
\begin{equation}
    y_i = x_i + \eta_i,
\end{equation}
quantization is correct if:
\begin{equation}
    |\eta_i| \leq \frac{\Delta}{2}.
\end{equation}
The probability is:
\begin{equation}
    P_{\text{correct}, i} = \int_{-\Delta/2}^{\Delta/2} \frac{1}{\sqrt{2\pi}\sigma} e^{-\frac{\eta_i^2}{2\sigma^2}} d\eta_i.
\end{equation}
Substituting:
\begin{equation}
    z = \frac{\eta_i}{\sqrt{2}\sigma}, \quad d\eta_i = \sqrt{2}\sigma \, dz,
\end{equation}
transforms the integral:
\begin{equation}
    P_{\text{correct}, i} = \frac{1}{\sqrt{\pi}} \int_{-\frac{\Delta}{2\sqrt{2}\sigma}}^{\frac{\Delta}{2\sqrt{2}\sigma}} e^{-z^2} dz.
\end{equation}
Using the error function:
\begin{equation}
    \text{erf}(x) = \frac{2}{\sqrt{\pi}} \int_0^x e^{-z^2} dz,
\end{equation}
we simplify to:
\begin{equation}
    P_{\text{correct}, i} = \text{erf}\left(\frac{\Delta}{2\sqrt{2}\sigma}\right).
\end{equation}
\end{proof}

\subsubsection{Multi-Dimensional Noise Robustness}

\begin{theorem}
For an $N$-dimensional vector $\mathbf{x}$ with independent Gaussian noise $\eta_i \sim \mathcal{N}(0, \sigma^2)$ in each dimension, the overall probability of correct quantization is:
\begin{equation}
    P_{\text{correct}} = \left[\text{erf}\left(\frac{\Delta}{2\sqrt{2}\sigma}\right)\right]^N.
\end{equation}
\end{theorem}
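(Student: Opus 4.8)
The plan is to reduce the multi-dimensional claim directly to the single-dimensional result already established in the previous theorem, exploiting the statistical independence of the noise across dimensions. First I would observe that the overall quantization of the vector $\mathbf{x}$ is correct precisely when every one of its $N$ components is quantized correctly; that is, the event of global correct quantization is the intersection of the $N$ per-dimension correct-quantization events. This is because the quantizer in this scheme acts coordinate-wise (each $x_i$ is rounded to the nearest level $v_{ij}$ independently of the other coordinates), so an error in any single dimension constitutes a global error.

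Next I would invoke the independence assumption stated in the theorem: the noise terms $\eta_i \sim \mathcal{N}(0, \sigma^2)$ are mutually independent across dimensions. Since the correct-quantization event for dimension $i$ depends only on $\eta_i$ (through the condition $|\eta_i| \leq \Delta/2$ from the single-dimension analysis), these $N$ events are mutually independent. Therefore the probability of their intersection factorizes into the product of the individual probabilities:
\begin{equation}
P_{\text{correct}} = \prod_{i=1}^{N} P_{\text{correct}, i}.
\end{equation}

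Finally I would substitute the single-dimension result $P_{\text{correct}, i} = \text{erf}\!\left(\frac{\Delta}{2\sqrt{2}\sigma}\right)$ from the preceding theorem. Since this expression is identical for every dimension $i$ (all dimensions share the same step size $\Delta$ and the same noise variance $\sigma^2$), the product collapses to the $N$-th power, yielding
\begin{equation}
P_{\text{correct}} = \left[\text{erf}\!\left(\frac{\Delta}{2\sqrt{2}\sigma}\right)\right]^N,
\end{equation}
which is the desired result.

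There is no genuine analytical obstacle here; the argument is essentially a one-line application of independence once the per-dimension result is granted. The only point requiring mild care is justifying the factorization: I would make explicit that correctness is a conjunction of per-coordinate events and that each such event is a measurable function of a single independent noise coordinate, so independence of the $\eta_i$ transfers to independence of the correctness events. A secondary subtlety worth noting, though not strictly needed for the stated claim, is the implicit assumption that every dimension shares the same $\Delta$; if the quantization levels $m_i$ differed across dimensions, the product would instead be $\prod_i \text{erf}\!\left(\frac{\Delta_i}{2\sqrt{2}\sigma}\right)$, and I would flag that the clean $N$-th-power form relies on uniform step size.
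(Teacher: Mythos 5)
Your proposal is correct and takes essentially the same approach as the paper, whose own ``proof'' is just a one-line remark that the multi-dimensional case follows from the single-dimensional theorem via independence; you simply make explicit the factorization into per-coordinate correctness events that the paper leaves implicit. Your caveat about the uniform step size $\Delta$ is also apt, since the paper's setup formally allows distinct levels $m_i$ per dimension, under which the clean $N$-th-power form would indeed become $\prod_i \text{erf}\left(\frac{\Delta_i}{2\sqrt{2}\sigma}\right)$.
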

\begin{proof}
By leveraging the conclusions from the single-dimensional case, we can easily extend them to derive the noise robustness in the multi-dimensional scenario.
\end{proof}

Based on the analysis above, we can draw several conclusions:
\begin{itemize}
\item \textbf{Quantization Levels ($m_i$)}: Higher $m_i$ improves precision but reduces noise tolerance; lower $m_i$ enhances noise robustness but sacrifices resolution.
    \item \textbf{Quantization Span ($U - L$)}: Larger spans increase noise robustness but risk losing fine details if inputs are concentrated in narrow ranges.
    \item \textbf{Noise Variance ($\sigma^2$)}: Lower variance improves robustness, while higher variance reduces it.
    \item \textbf{Dimensionality ($N$)}: Robustness decreases exponentially with $N$ as errors compound across dimensions.
\end{itemize}

Since noise cannot be directly controlled, the proposed Se-HiLo reduces the dimensionality of transmitted features \(N\) by decomposing them into high-and-low frequency componentss. Additionally, during FSQ quantization, the high-and-low frequency representations are processed in independent quantization spaces. This design allows for smaller quantization levels while inherently expanding the quantization span, thereby enhancing robustness and transmission efficiency in noisy environments.

\section{Experiments}
\subsection{Implement Details}
The model was trained on the \href{https://www.kaggle.com/datasets/alessiocorrado99/animals10}{Animal-10 Dataset} (it contains about 28K medium quality animal images belonging to 10 categories) with a batch size of 32 for 50 epochs using the Adam optimizer and a learning rate of 0.0001. The dropout rate was set to 0.1. Quantization levels were configured as [5, 5, 5, 5, 5] with a scaling factor $\alpha = 2$. Both high-and low-frequency channels had 256 dimensions. The encoder and decoder each contained 8 HiLo blocks.

\subsection{Evaluation Metrics}
PSNR evaluates image quality based on pixel-wise differences, with higher values indicating better reconstruction. SSIM measures perceptual similarity by considering structural information, where values closer to 1 represent higher similarity.

\subsection{Baseline}
We compare our method with TiTok~\cite{TiTok}, VQGAN~\cite{vqgan}, ViT-VQGAN~\cite{vitvqgan}, and MoVQGAN~\cite{movq}. Titok demonstrates the capability to compress images into 32 tokens, achieving an impressive compression ratio. VQGAN leverages vector quantization and adversarial training to improve image generation quality. ViT-VQGAN incorporates Vision Transformers (ViT) to capture long-range dependencies, enhancing feature representation. MoVQGAN proposes multichannel quantization to improve codebook utilization and reconstruction performance.
\subsection{Results}
\subsubsection{Comparisons with Baselines under Different SNR}
As illustrated in Fig.~\ref{exp_1}, as the SNR decreases, all methods exhibit varying degrees of quality degradation due to increased noise interference. Titok is highly susceptible to noise interference, and even under a 10dB SNR, the reconstructed image categories exhibit significant errors. VQGAN performs better, preserving structural information at higher SNRs, but produces noticeable blurring and distortions as noise increases. MoVQGAN and ViT-VQGAN achieve acceptable results under moderate noise conditions but fails to handle extreme noise, leading to severe artifacts at low SNRs. 
\begin{figure}[htbp]
    \centering
    \hspace*{-1cm}
    \includegraphics[width=3.4in]{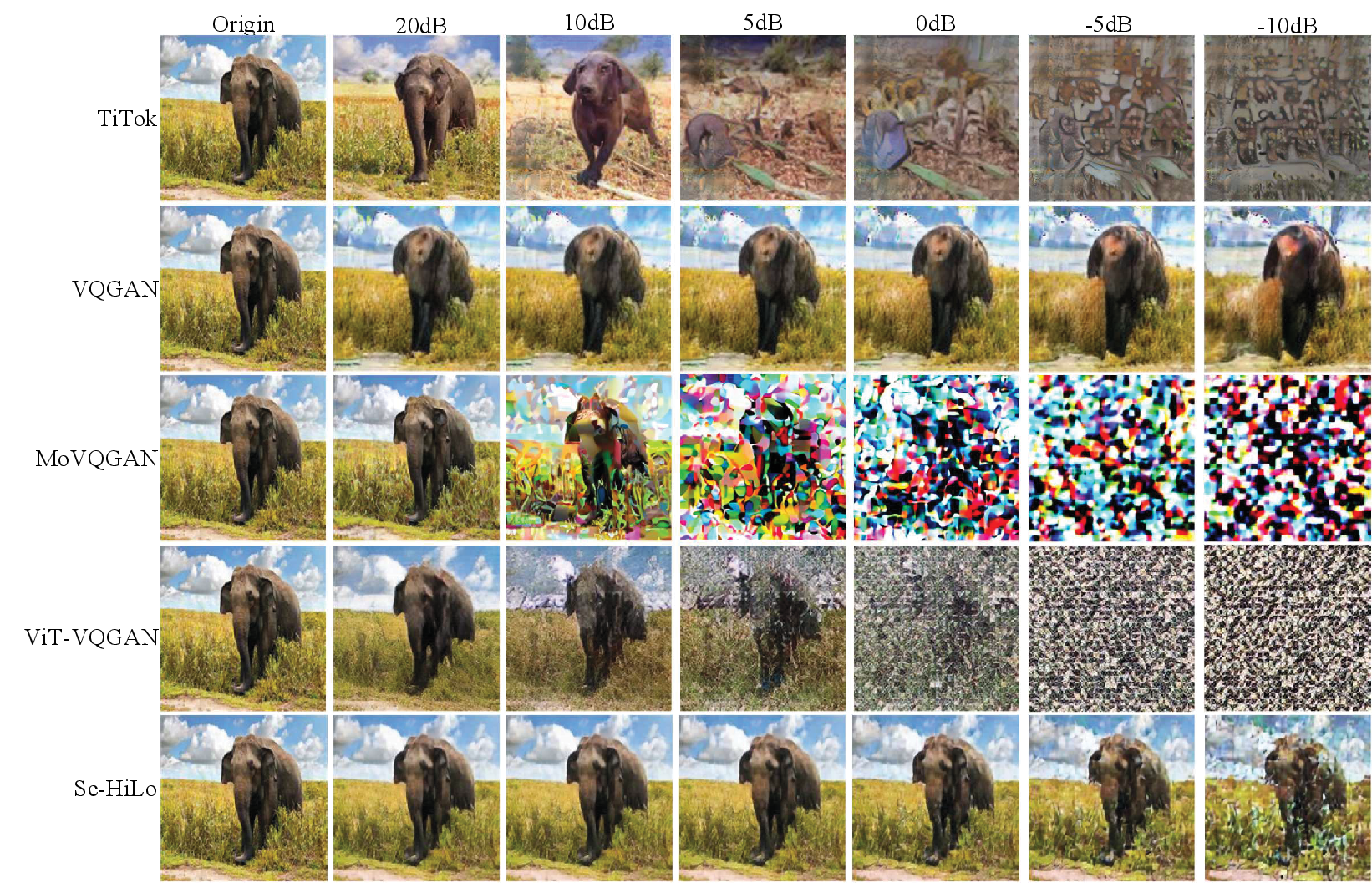}
    \caption{Visual Comparisons under Different SNR.}
    \label{exp_1}
\end{figure}

In contrast, the proposed Se-HiLo demonstrates superior performance across all SNR levels. It effectively preserves structural details and suppresses noise artifacts, maintaining high visual fidelity even under challenging conditions. These results underscore the robustness and adaptability of Se-HiLo in low-SNR environments.

To complement the visual comparisons, Fig.~\ref{exp_2} and Fig.~\ref{exp_3} present quantitative evaluations using PSNR and SSIM metrics under different SNR.
\begin{figure}[htbp]
\centering 
\includegraphics[width=2.3in]{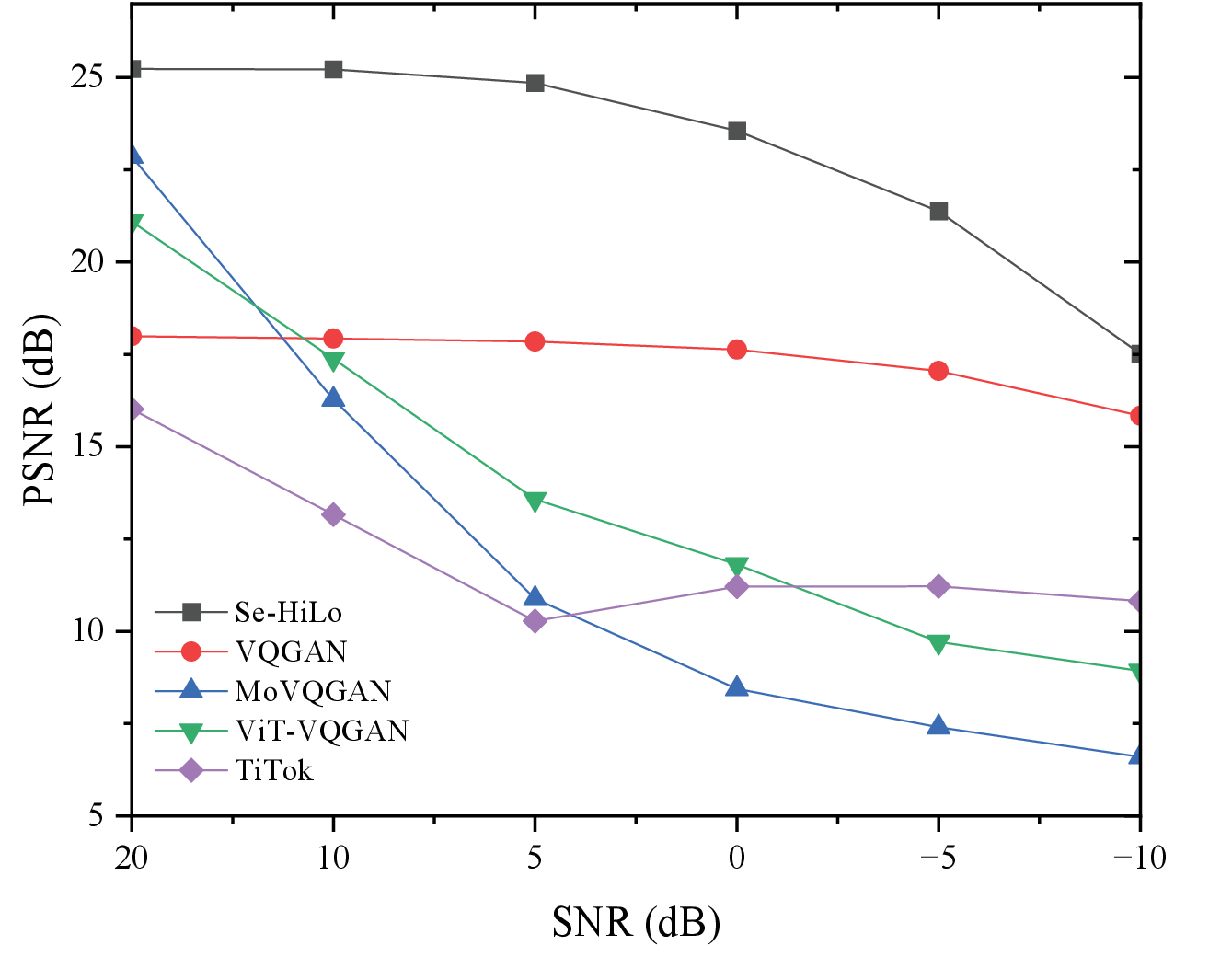} 
\caption{PSNR Performance under Different SNR.} 
\label{exp_2}
\end{figure}
    
\begin{figure}[htbp]
\centering
\includegraphics[width=2.3in]{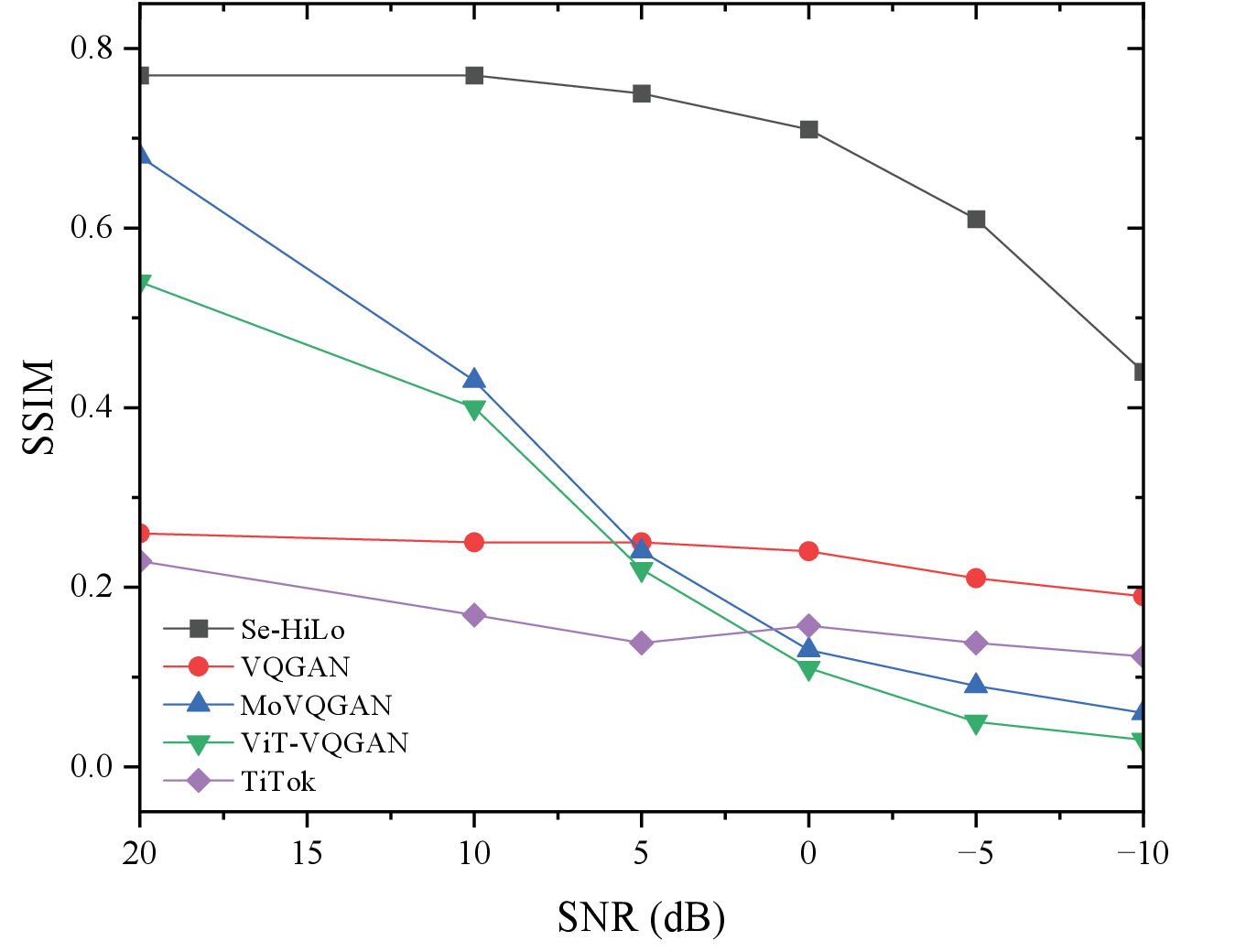}
\caption{SSIM Performance under Different SNR.} 
\label{exp_3}
\end{figure}

The proposed Se-HiLo achieves consistently higher PSNR across all SNR levels, demonstrating superior reconstruction quality. As SNR decreases, the performance of all models declines; however, Se-HiLo exhibits greater robustness, maintaining higher PSNR values even under severe noise conditions. At lower SNRs, the SSIM of MoVQGAN and ViT-VQGAN drop sharply, indicating poor structural preservation. In contrast, Se-HiLo demonstrates better resilience, retaining higher SSIM scores even in noisy environments.

To further demonstrate the superiority of Se-HiLo, we introduced artificial noise into the VQGAN, which had shown favorable performance in previous experiments, and subjected it to adversarial training. As shown in Fig.~\ref{loss}, we observed that, in order to adapt to higher noise levels, the model exhibited unstable loss curves and even failed to converge. This observation highlights that the noise resistance capability of neural networks has an upper bound. In contrast, Se-HiLo circumvents adversarial training and achieves noise robustness by simply adjusting the scaling factor $\alpha$.

\begin{figure}[htbp]
\centering
\includegraphics[width=2.3in]{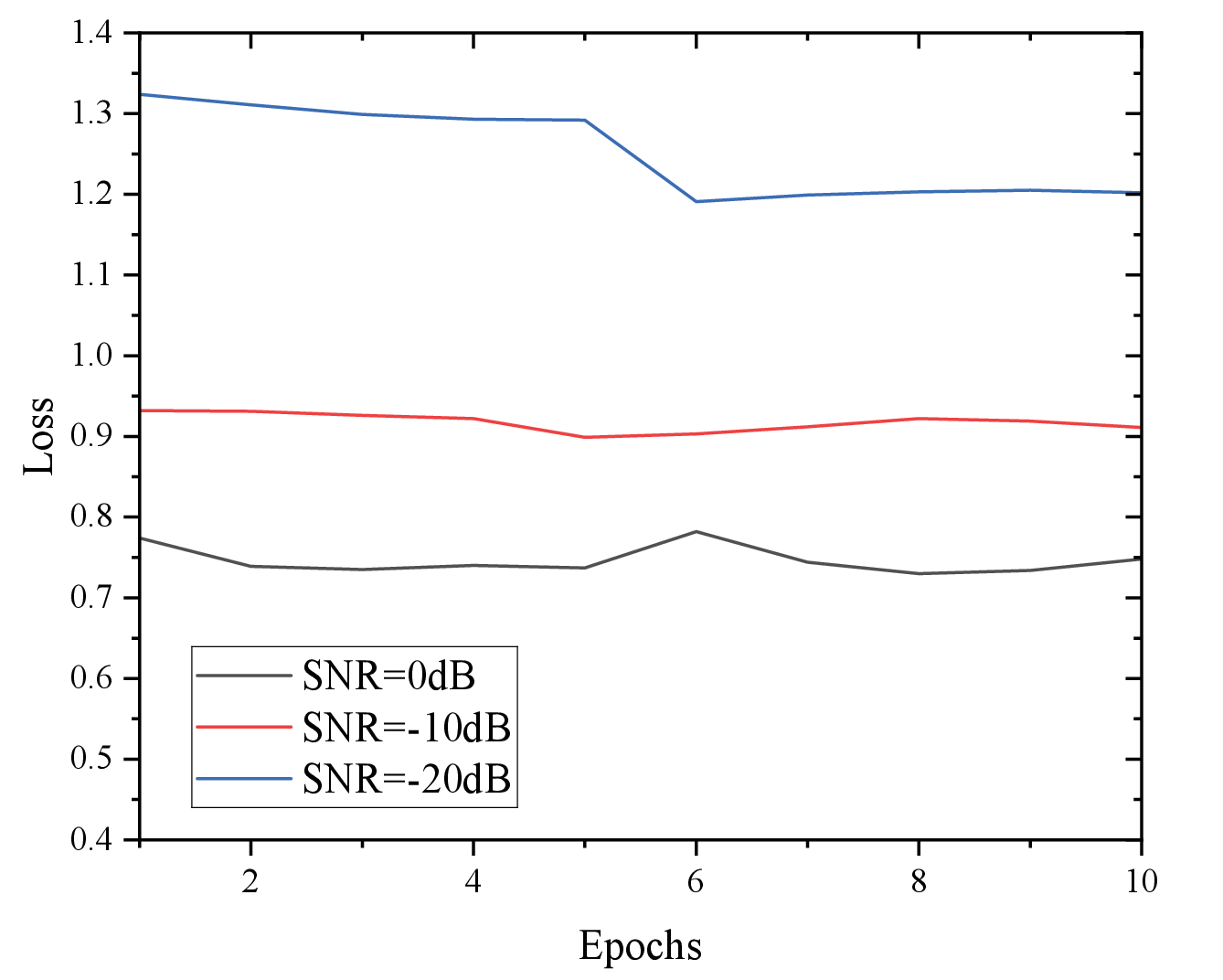}
\caption{Loss Curve of VQGAN with Adversarial Training.} 
\label{loss}
\end{figure}

\subsubsection{High-and-Low Frequency Representations under Different SNR}

We investigated the impact of noise on representations at different frequencies, as shown in Fig~\ref{exp_4}. When fixing the noise level of high-frequency representations and gradually increasing the noise in low-frequency representations, the global structure of the image becomes progressively blurred. However, the color and texture of the image remain relatively well-preserved. In contrast, when gradually increasing the noise in high-frequency representations, severe distortions in color and texture are observed, while the overall outline of the image remains relatively intact.

\begin{figure}[htbp]
    \centering
    \includegraphics[width=3in]{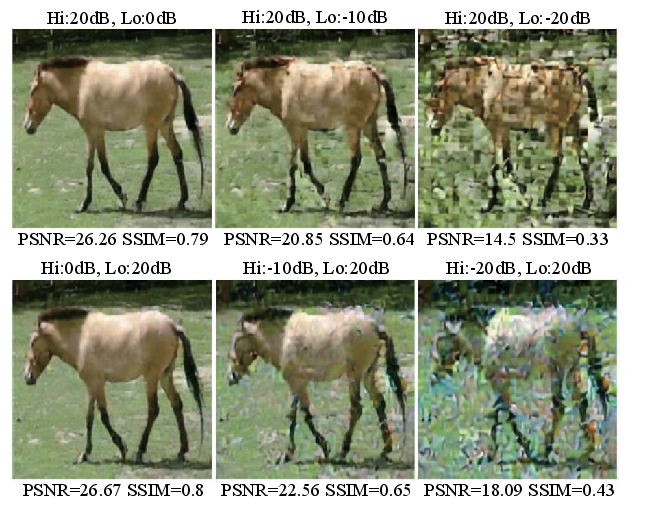}
    \caption{High-and-Low Frequency Representations under Different SNR}
    \label{exp_4}
\end{figure}
\subsubsection{Impact of High-and-Low Frequency Proportions}
We further adjusted the proportion of high- and low-frequency components during training. As shown in Fig~\ref{exp_5}, the experiments demonstrate that performance is optimal when the proportion of high-and-low frequency components is balanced. However, when the frequency component proportion becomes imbalanced, the performance degrades significantly under strong semantic noise interference.

\begin{figure}[htbp]
    \centering
    \includegraphics[width=2.8in]{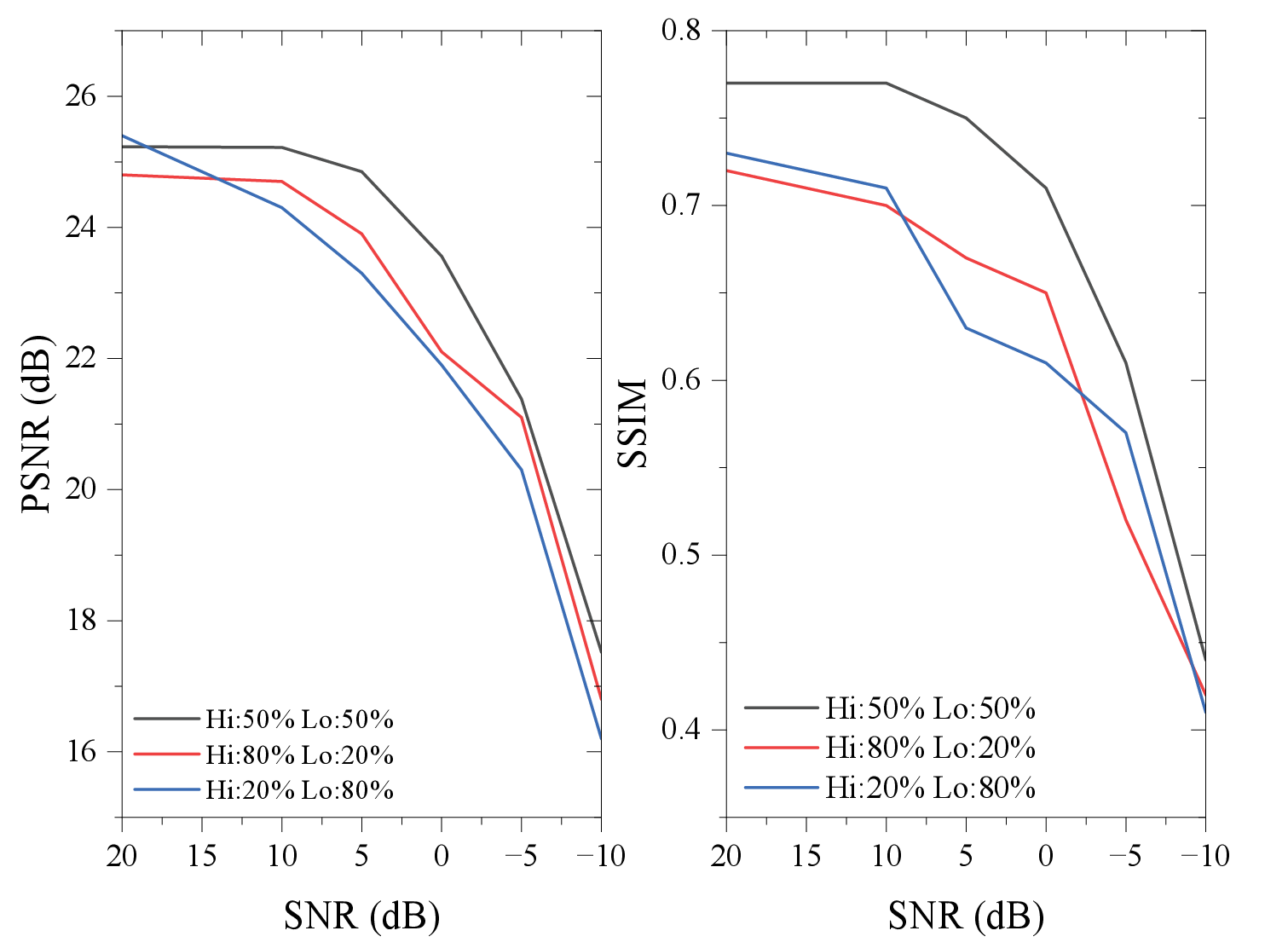}
    \caption{Impact of High-and-Low Frequency Proportions.}
    \label{exp_5}
\end{figure}

\section{Conclusion}
This paper investigates the susceptibility of representation vectors to semantic noise in semantic communication. To address this issue, we propose a Finite Scalar Quantization (FSQ) module that mitigates noise interference without adversarial training, providing a computationally efficient and scalable solution. To compensate for potential information loss caused by FSQ, we further introduce a transformer-based high-and-low frequency decomposition module. By separating representations into distinct frequency components, our approach preserves both global structures and fine-grained details, enhancing noise robustness.

\newpage
\bibliographystyle{named}
\bibliography{ijcai25}

\end{document}